\documentclass[pra, aps, twocolumn, preprintnumbers,showpacs, floatfix,superscriptaddress]{revtex4-1}

\usepackage{hyperref}
\usepackage{latexsym}
\usepackage{mathrsfs}
\usepackage{amssymb,amsmath,bm}
\usepackage{amsfonts}
\usepackage{ifthen}
\usepackage{graphicx}
\usepackage{bm}
\usepackage{color}
\usepackage[mathlines,displaymath]{lineno}
\linenumbers\setlength\linenumbersep{0.03in}

\newtheorem{theorem}{Theorem}
\newenvironment{proof}{\noindent\textit{Proof}: }{\hfill$\blacksquare$\vskip 0.5\baselineskip}

\begin{document}

\title{On the impossibility of non-static quantum bit commitment between two parties}

\author{Qin Li}
\email[]{liqin805@163.com}
\affiliation{College of Information Engineering,
Xiangtan University, Xiangtan 411105, China}
\affiliation{Department of Computer Science, Sun Yat-sen University,
Guangzhou 510006, China}

\author{Chengqing Li}
\affiliation{College of Information Engineering,
Xiangtan University, Xiangtan 411105, China}

\author{Dong-Yang Long}
\affiliation{Department of Computer Science, Sun Yat-sen University,
Guangzhou 510006, China}

\author{W. H. Chan}
\affiliation{Department of Mathematics,
Hong Kong Baptist University, Kowloon, Hong Kong}

\author{Chun-Hui Wu}
\affiliation{Department of Computer Science, Guangdong University of Finance, Guangzhou 510521, China}

\date{\today}

\begin{abstract}
Recently, Choi \emph{et al}. proposed an assumption on Mayers-Lo-Chau (MLC) no-go theorem that the state of the entire
quantum system is invariable to both participants before the unveiling phase. This means that the theorem
is only applicable to static quantum bit commitment (QBC). This paper find that the assumption is unnecessary and
the MLC no-go theorem can be applied to not only static QBC, but also non-static one. A non-static QBC protocol
proposed by Choi \emph{et al.} is briefly reviewed and analyzed to work as a supporting example. In addition, a novel way to prove the impossibility of the two kinds of QBC is given.
\end{abstract}

\pacs{03.67.Dd}

\maketitle

\section{Introduction}

Bit commitment allows a sender (Alice) to commit a bit $b\in\{0,1\}$ to a
receiver (Bob) in the following way: 1) Alice
can not change the value of the committed bit after the commitment
phase (binding property); 2) Bob can not obtain the value of the
committed bit before the unveiling phase (concealing property). Bit commitment
is an important cryptographic primitive and can be used as a
building block for some other cryptographic protocols, such as coin
flipping \cite{Nayak:Flip:PRA03}, oblivious transfer \cite{Crepeau:Oblivious:JMO94}, zero-knowledge proof
\cite{Crepeau:proofs:JCSS88}, and multiparty computation \cite{Crepeau:multiparty:LNCS95}.

A secure bit commitment protocol should satisfy the binding property and the concealing property at the same time.
However, unconditionally secure classical bit commitment protocols do not exist. There are only some unconditionally binding and
computationally concealing bit commitment protocols \cite{Naor:commitment:JC94} or
unconditionally concealing and computationally binding bit commitment ones
\cite{Naor:ZeroKnowledge:JC98}. Since unconditionally secure quantum key distribution
protocols were proposed in \cite{Bennett:CoinTossing:ICCSSP84,Ekert:Bell:PRL91,Bennett:nonorthogonal:PRL92}, some
quantum bit commitment (QBC) protocols have been proposed with the hope that QBC can
provide unconditional security \cite{Brassard:CoinToss:LNCS91,Brassard:commitment:ASFCS93,Ardehali:EPR:arXiv96}.
The most famous one is the bit commitment protocol proposed by Brassard \emph{et al}.
in \cite{Brassard:commitment:ASFCS93}, which was claimed to be unconditionally secure.
Unfortunately, the protocol was showed to be insecure afterwards
\cite{Mayers:Trouble:arXiv96}. Furthermore, Mayers, Lo and Chau proved that general secure QBC protocols are impossible
\cite{Mayers:Commitment:PRL97,Lo:CommitPossi:PRL97}, which is called MLC no-go theorem.

Although discovery of the MLC no-go theorem depressed much study on QBC protocols, researchers
try to design secure QBC by adopting certain restrictions or weakening some security
requirements. For instance, Kent proposed two bit commitment protocols based on special relativity theory
\cite{Kent:Commit:PRL99,Kent:ClassicalCommitment:JC05}; Damgard \emph{et al}. designed a secure QBC
protocol in a bounded quantum-storage model \cite{Damgard:QuantumModel:ASFCS05}; Hardy and Kent
gave a secure cheat sensitive QBC protocol ensuring that if
either a committer or a committee cheat, the other can detect it with a nonzero probability \cite{Kent:SensiCommit:PRL04}. Besides these, secure QBC protocols are implemented in a noisy-storage model under the assumption that the dishonest party can not access large-scale reliable quantum storage \cite{Wehner:noisystorage:PRA10}.

Recently, Choi \emph{et al}. proposed a secure non-static QBC protocol with
help of a trusted third party (TTP) \cite{Choi:NonStatic:arXiv09} and
pointed out that the MLC no-go theorem is based on an assumption that the whole quantum state is static
before Alice reveals the committed bit. That is to say, the MLC no-go theorem was thought to be adapted to static QBC only.
D'Ariano \emph{et al}. also hold this opinion and gave another
strengthened and explicit proof involving impossibility of some non-static QBC protocols \cite{DAriano:bitcommitment:PRA07}.
However, we find that
the assumption given by Choi \emph{et al}. in \cite{Choi:NonStatic:arXiv09} is unnecessary and
non-static QBC is also impossible just due to the MLC no-go
theorem. Although Choi \emph{et al}. proposed a secure non-static
QBC protocol by adopting a TTP \cite{Choi:NonStatic:arXiv09}, the
protocol is still different from a general two-party QBC protocol
and is somewhat similar to a quantum secret sharing protocol. Interestingly, the non-static
QBC protocol without the TTP can just serve as an example to show that the MLC no-go theorem can be applied to non-static QBC also. In addition, we
prove the impossibility of the two kinds of QBC in a different way: prove any binding QBC protocols is not concealing, while the related proofs proposed  in \cite{Mayers:Commitment:PRL97,Lo:CommitPossi:PRL97,Choi:NonStatic:arXiv09,DAriano:bitcommitment:PRA07} show any concealing protocols are not binding.

The rest of this paper is organized as follows. In the next section, it will be shown that the assumption
of the MLC no-go theorem given by Choi \emph{et al}. is unnecessary and the MLC no-go theorem can be adapted to
both static QBC and non-static QBC. The non-static QBC protocol proposed by Choi \emph{et al}. is reviewed and analyzed in Sec.~\ref{sec:review}.
Then the impossibility of QBC is proved in Sec.~\ref{sec:proofimpossibility} in a different way. The last section concludes the paper.

\section{Applicability of the MLC no-go theorem to non-static QBC}
\label{sec:theorem}

In \cite{Mayers:Commitment:PRL97,Lo:CommitPossi:PRL97}, the MLC no-go theorem was proved in the following
basic idea. Suppose the initial states of Alice and Bob are $|b\rangle_A$ ($b\in\{0, 1\}$) and
$|\varphi\rangle_B$, respectively, and let $U_{AB}$ denote all the
algorithms that Alice and Bob may implement. Then the final quantum
state shared by Alice and Bob is
$|\phi_b\rangle_{AB}=U_{AB}(|b\rangle_A\otimes|\varphi\rangle_B)$.
If a QBC protocol is perfectly concealing, namely
\begin{equation*}
\rho_0^B=Tr_A(|\phi_0\rangle_{AB}\langle\phi_0|)=Tr_A(|\phi_1\rangle_{AB}\langle\phi_1|)=\rho_1^B,
\end{equation*}
then there exists a local unitary transformation $S_A$
satisfying
\begin{equation*}
(S_A \otimes
I)U_{AB}(|b\rangle_A\otimes|\varphi\rangle_B)=U_{AB}(|1-b\rangle_A\otimes|\varphi\rangle_B)
\label{eq:SAcondition}
\end{equation*}
according to Gisin-Hughston-Jozsa-Wootters theorem given in
\cite{Gisin:Stochastic:HPA89,Hughstona:classification:PLA93}.
So, by postponing measurements and implementing local unitary
operations, Alice can change the value of the committed bit
arbitrarily without being discovered by Bob. If the QBC protocol is
supposed to be unconditionally concealing, similar results can be
derived also.

However, Choi \emph{et al}. observed that the local unitary
operation $S_A$ performed by Alice is related to Bob's initial state
$|\varphi\rangle_B$ \cite{Choi:NonStatic:arXiv09}. If $|\varphi\rangle_B$ is
random and unknown to Alice, she can not find a suitable local
unitary operation to change the committed value.  Thus, a necessary
assumption of the MLC no-go theorem is that the state of quantum
system should be static to both participants. This means the
MLC no-go theorem was considered to be applicable to static QBC
only.

As shown above, the proof of the MLC no-go theorem is based on the following strategy:
a QBC protocol is first supposed unconditionally concealing
and it is then proved that unconditionally binding is impossible.
So, Theorem~\ref{theorem:nonstaticQBC} can be obtained, which means that the assumption of
the MLC no-go theorem suggested by Choi \emph{et al}. is unnecessary.

\begin{theorem}
The MLC no-go theorem is also applied to non-static QBC.
\label{theorem:nonstaticQBC}
\end{theorem}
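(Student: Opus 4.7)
The plan is to revisit the standard MLC argument and verify that none of its steps actually require Bob's state to be static during the commit phase; the only ingredient it uses is the equality of Bob's reduced density matrices at the end of commit, which is precisely the concealing hypothesis. So the theorem should fall out of the original proof once one is careful about what ``non-static'' is really buying Bob.

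First I would assume for contradiction that a non-static QBC protocol is unconditionally concealing, and pass to the purified picture: coherently postpone all measurements, model any randomized state preparation performed by Bob (or by a TTP) as a unitary acting on fixed ancillas, and absorb the resulting ancillas into the registers $A$ and $B$. Under this standard dilation the joint state at the end of the commit phase can be written as a pure state $|\phi_b\rangle_{AB}=U_{AB}(|b\rangle_A\otimes|\varphi\rangle_B)$ for a single, publicly specified $|\varphi\rangle_B$, even though the substate Bob actually manipulates during the protocol may be time-varying. The concealing hypothesis then forces $\rho_0^B=\rho_1^B$ at the end of commit, and GHJW supplies a local unitary $S_A$ on Alice's register with $(S_A\otimes I)|\phi_0\rangle_{AB}=|\phi_1\rangle_{AB}$, so Alice can flip the committed bit undetected and binding fails.

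The second step is to pin down why $S_A$ is available to Alice in the non-static setting. I would emphasize that $S_A$ is a function only of the two pure states $|\phi_0\rangle_{AB}$ and $|\phi_1\rangle_{AB}$, both of which are fully determined by the protocol specification together with the fixed purifying ancillas: any ``randomness'' in Bob's preparation has been folded into $|\varphi\rangle_B$, so there is no longer an unknown parameter for $S_A$ to depend on. Thus Alice can compute $S_A$ offline from the published description of the protocol and apply it to her own register at unveiling time without ever touching $B$.

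The main obstacle will be addressing Choi \emph{et al.}'s objection head-on, namely their claim that the dependence of $S_A$ on $|\varphi\rangle_B$ blocks the attack when Bob's state is random and unknown to Alice. I would argue that this objection conflates ``non-static from Bob's operational viewpoint'' with ``non-static in the global purified picture''; any two-party (and even TTP-mediated) quantum protocol admits a purification in which every party's initial state is a fixed, publicly specified vector and the entire protocol is a deterministic unitary, so the MLC argument applies verbatim. Checking carefully that this purification can always be carried out without enlarging the trust model — in particular that a genuinely two-party non-static protocol, such as the one of Choi \emph{et al.} with the TTP stripped away, fits the dilated framework above — is the delicate part, and is what the example in Sec.~\ref{sec:review} is intended to illustrate.
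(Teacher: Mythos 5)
Your argument is sound for the two-party setting the theorem is actually about, but it reaches the conclusion by a genuinely different route than the paper. You purify: Bob's (random, to Alice unknown) preparation is dilated into a unitary acting on fixed ancillas kept on Bob's side, so the global state at the end of commitment is $|\phi_b\rangle=U_{AB}(|b\rangle_A\otimes|\varphi\rangle_B)$ with a single, publicly specified $|\varphi\rangle_B$; since a cheating Bob may legitimately keep this purification, concealing forces $\rho_0^B=\rho_1^B$ in the purified picture, and GHJW hands Alice an $S_A$ computable from public data alone. Projecting onto the record ancilla's basis states then shows this one $S_A$ flips the commitment for \emph{every} state in the support of Bob's distribution, which also answers Choi \emph{et al}.'s objection that $S_A$ ``depends on'' $|\varphi\rangle_B$. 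The paper instead works directly in coordinates: it writes $U_{AB}=\sum_{ijkl}a_{ijkl}|i\rangle_A|j\rangle_B{}_A\langle k|_B\langle l|$, imposes $\rho_0^B=\rho_1^B$ for \emph{all} $|\varphi\rangle_B$, extracts the coefficient identity $a_{ij0l}a_{iq0r}^*=a_{ij1l}a_{iq1r}^*$, and constructs an explicit diagonal $S_A$ with $s_{xx}=a_{xq0r}^*/a_{xq1r}^*$ satisfying $(S_A\otimes I)|\phi_0\rangle=|\phi_1\rangle$ uniformly in $|\varphi\rangle_B$. Your route is the more standard and arguably more robust one (it only needs concealing against the purifying adversary for the distribution actually used, rather than for arbitrary $|\varphi\rangle_B$, and it does not rely on a term-by-term coefficient matching); the paper's route has the virtue of exhibiting the cheating unitary explicitly.

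One genuine overstatement to fix: the parenthetical claim that the dilation argument applies ``even [to] TTP-mediated'' protocols is wrong as stated. When the randomness is held by a trusted third party, the purifying register belongs to neither Alice nor Bob; concealing only constrains Bob's marginal $\rho_b^B$, and two purifications of $\rho_b^B$ on the cut $A T$ versus $B$ are related by a unitary on $AT$, not on $A$ alone, so no local $S_A$ is guaranteed. Choi \emph{et al}.'s TTP protocol is precisely such a case, and the paper does not claim it is broken --- it only argues it is not a two-party protocol (cf.\ Sec.~\ref{sec:review}). Restrict your dilation claim to randomness generated by Bob himself and the proof stands.
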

\begin{proof}Assume $|\varphi\rangle_B$ is random and unknown to Alice. Let
$U_{AB}=\sum_{ijkl}a_{ijkl}|i\rangle_A|j\rangle_B$$ _A\langle
k|_B\langle l|$ and $|\varphi\rangle_B=\sum_m c_m|m\rangle_B$, then
we have
\begin{eqnarray*}
|\phi_b\rangle_{AB} & = & U_{AB}(|b\rangle_A\otimes|\varphi\rangle_B)\nonumber\\
                    & = & \sum_{ijkl}a_{ijkl}|i\rangle_A|j\rangle_B {_A\langle} k|_B\langle l|\left(\sum_m c_m|b\rangle_A|m\rangle_B\right)\\
                    & = & \sum_{ijl}a_{ijbl}c_l|i\rangle_A|j\rangle_B,
\end{eqnarray*}
and
\begin{eqnarray*}
\rho_b^B & = & Tr_A(|\phi_b\rangle_{AB}\langle \phi_b|)\\
         & = & Tr_A\left(\sum_{ijl}a_{ijbl}c_l|i\rangle_A|j\rangle_B\sum_{pqr}a_{pqbr}^*c_r^*{}_A\langle p|_B\langle q|\right)\\
         & = & \sum_{ijlqr}a_{ijbl}a_{iqbr}^*{c_l}{c_r^*}|j\rangle_B\langle q|.
\end{eqnarray*}

Suppose a non-static QBC protocol is perfectly concealing, then
$\rho_0^B$ and $\rho_1^B$ should be identical for any
$|\varphi\rangle_B=\sum_m c_m|m\rangle_B$, i.e.
\begin{eqnarray*}
\rho_0^B & = & \sum_{ijlqr}a_{ij0l}a_{iq0r}^*{c_l}{c_r^*}|j\rangle_B\langle q|\\
         & = & \sum_{ijlqr}a_{ij1l}a_{iq1r}^*{c_l}{c_r^*}|j\rangle_B\langle q|\\
         & = & \rho_1^B.
\end{eqnarray*}

Since the above formula always holds for any $|\varphi\rangle_B$, we have
\begin{equation}
a_{ij0l}a_{iq0r}^*=a_{ij1l}a_{iq1r}^*.
\label{eq:equalityaij}
\end{equation}

Let $S_A=\sum\limits_{xy}s_{xy}|x\rangle_A\langle y|$, where
\begin{equation*}
\left\{ \begin{array}{ll}
s_{xy}  =  0,                             & \text{if } x \ne y,\\
s_{xx}  =  0,                             & \text{if } a_{xq1r}=0 \text{ for any } q \text{ and } r,\\
s_{xx}  =  \frac{a_{xq0r}^*}{a_{xq1r}^*}, & \text{otherwise.}\\
\end{array} \right.
\end{equation*}

Equation~(\ref{eq:equalityaij}) makes $S_A$ always satisfy
\begin{eqnarray*}
(S_A\otimes I_B)|\phi_0\rangle_{AB} & = & \left(\sum_{xy}s_{xy}|x\rangle_A\langle y|\otimes \sum_k|k\rangle_B\langle k|\right)\\
                                    &   & {}{ }\left(\sum_{ijl}a_{ij0l} c_l|i\rangle_A|j\rangle_B\right)\\
                                    & = & \sum_{ijl}a_{ij0l} c_l\sum_x s_{xi}|x\rangle_A|j\rangle_B\\
                                    & = & \sum_{ijl}a_{ij1l} c_l|i\rangle_A|j\rangle_B\\
                                    & = & |\phi_1\rangle_{AB},
\end{eqnarray*}
for any $|\varphi\rangle_B$. Thus the non-static QBC protocol is not binding. If assume it is unconditionally concealing,
similar results can be obtained also.
\end{proof}

\section{Review and analysis of Choi \emph{et al}.'s non-static QBC protocol}
\label{sec:review}

In \cite{Choi:NonStatic:arXiv09}, Choi \emph{et al}. proposed an
unconditionally secure non-static QBC protocol with aid of a TTP.
We briefly reviewed it in the following four phases and then show
its simplified version can serve as an example demonstrating the MLC no-go theorem is
applicable to non-static QBC.

\paragraph{Preparing phase}: First, Alice and TTP share $N$ maximally entangled states in the form $|\psi^-\rangle_{AT}=\frac{|01>-|10>}{\sqrt{2}}$. This kind of entangled state has a special property, namely equation
\begin{equation*}
|\psi^-\rangle_{AT}=(U\otimes U)|\psi^-\rangle_{AT}
\end{equation*}
holds up to the global phase for any unitary transformation $U$. Then, TTP applies random projection measurements represented as
\begin{equation*}
M_i=\{|f_i\rangle_T\langle f_i|,|f_i^\bot\rangle_T\langle f_i^\bot|\}
\label{eq:two}
\end{equation*}
to its qubit of each entangled state $|\psi^-\rangle_{AT}$ for $i=1 \sim N$.
If the measurement outcome of TTP is
$|f_i\rangle(|f_i^\bot\rangle)$, then Alice's measurement result
should be $|\psi_i\rangle_A=|f_i^\bot\rangle(|f_i\rangle)$. But
TTP does not announce $M_i$ now, so Alice can not know the result
$|\psi_i\rangle_A$.

\paragraph{Commitment phase}:
To committee the bit $b$, Alice applies the corresponding operations $P_i\in\{M,N,J,K\}$, where
\begin{eqnarray*}
M = \left (
\begin{array}{cc}
1 & 0 \\
0 & 1
\end{array}\right),
N = \left (
\begin{array}{cc}
0 & -1 \\
1 & 0
\end{array}\right),\\
J = \frac{1}{\sqrt{2}}\left (
\begin{array}{cc}
1 & i \\
1 & -i
\end{array}\right),
K = \frac{1}{\sqrt{2}}\left (
\begin{array}{cc}
1 & i \\
-1 & i
\end{array}\right).
\end{eqnarray*}
If Alice chooses to commit $b=0$, she randomly sends
$M|\psi_i\rangle_A$ or $N|\psi_i\rangle_A$ to Bob. Otherwise, she
sends $J|\psi_i\rangle_A$ or $K|\psi_i\rangle_A$ instead with the same
probability. To guarantee the randomness, Alice introduces an
auxiliary system $A'$ whose initial state is
$|+\rangle_{A'}=\frac{|0\rangle_{A'}+|1\rangle_{A'}}{\sqrt{2}}$.
Then the state of the whole system $A'A$ is
$|+\rangle|\psi_i\rangle$. If $b=0$, Alice applies
$|0\rangle_{A'}\langle 0|\otimes M+|1\rangle_{A'}\langle 1|\otimes
N$ to $A'A$ to obtain
\begin{equation*}
|\varphi_0\rangle_{A'A}=\frac{|0\rangle_{A'}\otimes M|\psi_i\rangle_A+|1\rangle_{A'}\otimes N|\psi_i\rangle_A}{\sqrt{2}}.
\end{equation*}
Otherwise, she implements $|0\rangle_{A'}\langle 0|\otimes
J+|1\rangle_{A'}\langle 1|\otimes K$ on $A'A$ and gets
\begin{equation*}
|\varphi_1\rangle_{A'A}=\frac{|0\rangle_{A'}\otimes J|\psi_i\rangle_A+|1\rangle_{A'}\otimes K|\psi_i\rangle_A}{\sqrt{2}}.
\end{equation*}
Due to the randomness of $|\psi_i\rangle_A$, the resulting state
$|\varphi_b\rangle_{A'A}$ is also different, thus Alice cannot
control the relationship between $|\varphi_0\rangle_{A'A}$ and
$|\varphi_1\rangle_{A'A}$ without knowing the exact state
$|\psi_i\rangle_A$.

\paragraph{Sustaining phase}:
In this phase, both Alice and Bob do nothing.

\paragraph{Revealing phase}: Alice unveils all $P_i$'s, and then TTP opens all $M_i$'s and the corresponding measurement outcomes.
After knowing all the information, Bob measures $P_i^\dagger P_i|\psi_i\rangle$ with $M_i$ and compares
all the measurement results with those TTP announced. If all the measurement outcomes are opposite,
Alice is honest and the committed value has not been changed; otherwise Alice is dishonest.

In \cite{Choi:NonStatic:arXiv09}, Choi \emph{et al}. claimed that the protocol is unconditionally secure.
However, the usage of TTP makes the above non-static QBC protocol do not
correspond to the fact that only two parties is involved in a
general QBC protocol, although TTP plays a little role in offering
quantum sources and is not involved in communication between two
parties directly. In a way, the protocol is more like a quantum secret sharing
protocol. For instance, the cooperation between Bob and TTP can get Aice's committed value while one of them cannot. If the actions
implemented by TTP are replaced by Bob, the protocol will not be
secure. As shown by Choi \emph{et al}. in \cite{Choi:NonStatic:arXiv09}, if the non-static QBC
protocol without a TTP is perfectly concealing, a local unitary
operator
$$S_A = \left(
\begin{array}{ll}
 a & b \\
 c & d \\
\end{array} \right)$$
such that $J=aM+bN$ and $K=cM+dN$, can be used to freely change the committed bit.
Thus it can be seen that Choi \emph{et al}.'s non-static QBC protocol without a TTP can
serve as a specific example to demonstrate that the MLC no-go theorem is
applicable to non-static QBC also.

\section{Proof of impossibility of QBC by another way}
\label{sec:proofimpossibility}

Although non-static QBC between two participants is also impossible due to the MLC no-go theorem,
it provides us another way to prove the impossibility of both non-static and static QBC.

Let us show the case on non-static QBC first. Premise of the proof of the MLC no-go theorem is that the QBC
protocol is supposed to be perfectly concealing,
\begin{equation}
F(\rho_0^B,\rho_1^B)=1,
\label{eq:perfectconcealcondition}
\end{equation}
or unconditionally concealing,
\begin{equation*}
F(\rho_0^B,\rho_1^B)=1-\delta,
\end{equation*}
where $\delta>0$.

For non-static QBC protocols, different
initial states $|\varphi\rangle_B$ may lead to different $\rho_b^B$,
so the value of $F(\rho_0^B,\rho_1^B)$ may vary and
it is difficult to make the concealing property be satisfied. On the other hand,
since $|\varphi\rangle_B$ is totally random and unknown to Alice, it
is difficult for Alice to find an appropriate local unitary operator
$S_A$ such that
\begin{equation}
F((S_A \otimes
I)U_{AB}(|b\rangle_A\otimes|\varphi\rangle_B),U_{AB}(|1-b\rangle\otimes|\varphi\rangle_B))=1
\label{eq:FSAcondition}
\end{equation}
or
\begin{equation*}
F((S_A \otimes
I)U_{AB}(|b\rangle_A\otimes|\varphi\rangle_B),U_{AB}(|1-b\rangle\otimes|\varphi\rangle_B))=1-\delta
\end{equation*}
is satisfied for all $|\varphi\rangle_B$.

Thus, it is better to suppose a non-static QBC protocol is perfectly or
unconditionally binding and then prove it cannot be perfectly or
unconditionally concealing.

Assume a non-static QBC protocol is perfectly binding, i.e., there
does not exist a local unitary operator $S_A$ such that Eq.~(\ref{eq:FSAcondition}) holds for any $|\varphi\rangle_B$. Then there must be
some $|\varphi\rangle_B$ such that
\begin{equation*}
\rho_0^B=Tr_A(|\phi_0\rangle_{AB}\langle \phi_0|)\neq
Tr_A(|\phi_1\rangle_{AB}\langle \phi_1|)=\rho_1^B.
\end{equation*}
Otherwise, the assumption violates the MLC no-go theorem. In other
words, if Eq.~(\ref{eq:perfectconcealcondition}) holds for any $|\varphi\rangle_B$, Alice
can find a local unitary operation $S_A$ to freely change the
committed bit according to the MLC no-go theorem. Thus Bob can choose such $|\varphi\rangle_B$ to
get some information of Alice's committed bit and the non-static QBC
is not perfectly concealing. If a non-static QBC protocol is assumed
to be unconditionally binding, similar conclusions can be made.

This new approach also can be used to prove impossibility of
static QBC. Given a fixed $|\varphi\rangle_B$, if a static QBC
protocol is supposed to be perfectly binding, then there is no local
unitary operator $S_A$ satisfying Eq.~(\ref{eq:FSAcondition}). According to the
Uhlmann's theorem in \cite{Jozsa:MixedQuantum:JMO94}, we can find $|\phi\rangle$, a
purification of $\rho_0^B$, such that
\begin{equation*}
F(\rho_0^B,\rho_1^B)=|\langle \phi|\phi_1\rangle|,
\end{equation*}
where $|\phi_1\rangle=U_{AB}(|1\rangle_A\otimes|\varphi\rangle_B)$
is a purification of $\rho_1^B$. Between two purifications of
$\rho_0^B$, $|\phi\rangle$ and
$|\phi_0\rangle=U_{AB}(|0\rangle_A\otimes|\varphi\rangle_B)$, there
always exist a local unitary operator $S_A$ such that $(S_A\otimes
I)|\phi_0\rangle=|\phi\rangle$. Besides, from the assumption, we
know that there does not exist a local unitary operator $S_A$ such
that $(S_A\otimes I)|\phi_0\rangle=|\phi_1\rangle$. Thus
$|\phi\rangle$ cannot be equal to $|\phi_1\rangle$ and
\begin{equation*}
F(\rho_0^B,\rho_1^B)=|\langle \phi|\phi_1\rangle|\neq 1,
\end{equation*}
which means the static QBC protocol is not perfectly concealing. If
assume a static QBC protocol is unconditionally binding, we can
prove it is not unconditionally concealing employing the similar
method.

\section{Conclusion}

In this paper, we show the assumption given by Choi \emph{et al}. on
the MLC no-go theorem in \cite{Choi:NonStatic:arXiv09}, that the entire quantum state should be static to
both participants before the unveiling phase, is
unnecessary, and the MLC no-go theorem can be applied to both static QBC
and non-static QBC. In addition, a secure non-static QBC protocol proposed by Choi \emph{et al}. in \cite{Choi:NonStatic:arXiv09}
is found more like to a quantum secret sharing protocol, instead of a general two-party QBC protocol. Just
inspired by the non-static QBC, we prove the impossibility of QBC in another way: suppose a QBC protocol is binding first, then
show it is not concealing. Now, we can say that the MLC no-go
theorem lets any two-party QBC protocol satisfying concealing property is not
binding and the novel proof for the impossibility of QBC given by us makes any two-party QBC protocol satisfying binding property is not
concealing. In all, any two-party QBC protocol, no matter static or non-static, is not secure.

\section*{Acknowledgements}

The authors would like to thank Guang-Ping He and
Chun-Yuan Lu for their constructive suggestions on improving this paper.
The work of Qin Li and Chengqing Li was supported by start-up funding of Xiangtan University of grant number 10QDZ39.
The work of W. H. Chan was partially supported by the Faculty Research Grant of Hong Kong
Baptist University under grant number FRG2/08-09/070.

\bibliographystyle{apsrev4-1}
\bibliography{NSQBC}

\begin{thebibliography}{25}%
\makeatletter
\providecommand \@ifxundefined [1]{%
 \@ifx{#1\undefined}
}%
\providecommand \@ifnum [1]{%
 \ifnum #1\expandafter \@firstoftwo
 \else \expandafter \@secondoftwo
 \fi
}%
\providecommand \@ifx [1]{%
 \ifx #1\expandafter \@firstoftwo
 \else \expandafter \@secondoftwo
 \fi
}%
\providecommand \natexlab [1]{#1}%
\providecommand \enquote  [1]{``#1''}%
\providecommand \bibnamefont  [1]{#1}%
\providecommand \bibfnamefont [1]{#1}%
\providecommand \citenamefont [1]{#1}%
\providecommand \href@noop [0]{\@secondoftwo}%
\providecommand \href [0]{\begingroup \@sanitize@url \@href}%
\providecommand \@href[1]{\@@startlink{#1}\@@href}%
\providecommand \@@href[1]{\endgroup#1\@@endlink}%
\providecommand \@sanitize@url [0]{\catcode `\\12\catcode `\$12\catcode
  `\&12\catcode `\#12\catcode `\^12\catcode `\_12\catcode `\%12\relax}%
\providecommand \@@startlink[1]{}%
\providecommand \@@endlink[0]{}%
\providecommand \url  [0]{\begingroup\@sanitize@url \@url }%
\providecommand \@url [1]{\endgroup\@href {#1}{\urlprefix }}%
\providecommand \urlprefix  [0]{URL }%
\providecommand \Eprint [0]{\href }%
\providecommand \doibase [0]{http://dx.doi.org/}%
\providecommand \selectlanguage [0]{\@gobble}%
\providecommand \bibinfo  [0]{\@secondoftwo}%
\providecommand \bibfield  [0]{\@secondoftwo}%
\providecommand \translation [1]{[#1]}%
\providecommand \BibitemOpen [0]{}%
\providecommand \bibitemStop [0]{}%
\providecommand \bibitemNoStop [0]{.\EOS\space}%
\providecommand \EOS [0]{\spacefactor3000\relax}%
\providecommand \BibitemShut  [1]{\csname bibitem#1\endcsname}%
\let\auto@bib@innerbib\@empty
\bibitem [{\citenamefont {Nayak}\ and\ \citenamefont
  {Shor}(2003)}]{Nayak:Flip:PRA03}%
  \BibitemOpen
  \bibfield  {author} {\bibinfo {author} {\bibfnamefont {A.}~\bibnamefont
  {Nayak}}\ and\ \bibinfo {author} {\bibfnamefont {P.}~\bibnamefont {Shor}},\
  }\href@noop {} {\bibfield  {journal} {\bibinfo  {journal} {Physical Review
  A}\ }\textbf {\bibinfo {volume} {67}},\ \bibinfo {pages} {article no. 012304}
  (\bibinfo {year} {2003})}\BibitemShut {NoStop}%
\bibitem [{\citenamefont {Crepeau}(1994)}]{Crepeau:Oblivious:JMO94}%
  \BibitemOpen
  \bibfield  {author} {\bibinfo {author} {\bibfnamefont {C.}~\bibnamefont
  {Crepeau}},\ }\href@noop {} {\bibfield  {journal} {\bibinfo  {journal}
  {Journal of Modern Optics}\ }\textbf {\bibinfo {volume} {41}},\ \bibinfo
  {pages} {2445} (\bibinfo {year} {1994})}\BibitemShut {NoStop}%
\bibitem [{\citenamefont {Brassard}\ \emph {et~al.}(1988)\citenamefont
  {Brassard}, \citenamefont {Chaum},\ and\ \citenamefont
  {Crepeau}}]{Crepeau:proofs:JCSS88}%
  \BibitemOpen
  \bibfield  {author} {\bibinfo {author} {\bibfnamefont {G.}~\bibnamefont
  {Brassard}}, \bibinfo {author} {\bibfnamefont {D.}~\bibnamefont {Chaum}}, \
  and\ \bibinfo {author} {\bibfnamefont {C.}~\bibnamefont {Crepeau}},\
  }\href@noop {} {\bibfield  {journal} {\bibinfo  {journal} {Journal of
  Computer and System Sciences}\ }\textbf {\bibinfo {volume} {37}},\ \bibinfo
  {pages} {156} (\bibinfo {year} {1988})}\BibitemShut {NoStop}%
\bibitem [{\citenamefont {Crepeau}\ \emph {et~al.}(1995)\citenamefont
  {Crepeau}, \citenamefont {van~de Graaf},\ and\ \citenamefont
  {Tapp}}]{Crepeau:multiparty:LNCS95}%
  \BibitemOpen
  \bibfield  {author} {\bibinfo {author} {\bibfnamefont {C.}~\bibnamefont
  {Crepeau}}, \bibinfo {author} {\bibfnamefont {J.}~\bibnamefont {van~de
  Graaf}}, \ and\ \bibinfo {author} {\bibfnamefont {A.}~\bibnamefont {Tapp}},\
  }in\ \href@noop {} {\emph {\bibinfo {booktitle} {Advances In
  Cryptology-Crypto'95}}},\ \bibinfo {series} {Lecture Notes in Computer
  Science}, Vol.\ \bibinfo {volume} {963}\ (\bibinfo  {publisher} {Springer},\
  \bibinfo {year} {1995})\ pp.\ \bibinfo {pages} {110--123}\BibitemShut
  {NoStop}%
\bibitem [{\citenamefont {Naor}(1994)}]{Naor:commitment:JC94}%
  \BibitemOpen
  \bibfield  {author} {\bibinfo {author} {\bibfnamefont {M.}~\bibnamefont
  {Naor}},\ }\href@noop {} {\bibfield  {journal} {\bibinfo  {journal} {Journal
  of Cryptology}\ }\textbf {\bibinfo {volume} {4}},\ \bibinfo {pages} {151}
  (\bibinfo {year} {1994})}\BibitemShut {NoStop}%
\bibitem [{\citenamefont {Naor}\ \emph {et~al.}(1998)\citenamefont {Naor},
  \citenamefont {Ostrovsky}, \citenamefont {Venkatesan},\ and\ \citenamefont
  {Yung}}]{Naor:ZeroKnowledge:JC98}%
  \BibitemOpen
  \bibfield  {author} {\bibinfo {author} {\bibfnamefont {M.}~\bibnamefont
  {Naor}}, \bibinfo {author} {\bibfnamefont {R.}~\bibnamefont {Ostrovsky}},
  \bibinfo {author} {\bibfnamefont {R.}~\bibnamefont {Venkatesan}}, \ and\
  \bibinfo {author} {\bibfnamefont {M.}~\bibnamefont {Yung}},\ }\href@noop {}
  {\bibfield  {journal} {\bibinfo  {journal} {Journal of Cryptology}\ }\textbf
  {\bibinfo {volume} {11}},\ \bibinfo {pages} {87} (\bibinfo {year}
  {1998})}\BibitemShut {NoStop}%
\bibitem [{\citenamefont {Bennett}\ and\ \citenamefont
  {Brassard}(1984)}]{Bennett:CoinTossing:ICCSSP84}%
  \BibitemOpen
  \bibfield  {author} {\bibinfo {author} {\bibfnamefont {C.~H.}\ \bibnamefont
  {Bennett}}\ and\ \bibinfo {author} {\bibfnamefont {G.}~\bibnamefont
  {Brassard}},\ }in\ \href@noop {} {\emph {\bibinfo {booktitle} {Proceedings of
  the IEEE International Conference on Computers Systems and Signal
  Processing}}}\ (\bibinfo {year} {1984})\ pp.\ \bibinfo {pages}
  {175--179}\BibitemShut {NoStop}%
\bibitem [{\citenamefont {Ekert}(1991)}]{Ekert:Bell:PRL91}%
  \BibitemOpen
  \bibfield  {author} {\bibinfo {author} {\bibfnamefont {A.~K.}\ \bibnamefont
  {Ekert}},\ }\href@noop {} {\bibfield  {journal} {\bibinfo  {journal}
  {Physical Review Letters}\ }\textbf {\bibinfo {volume} {67}},\ \bibinfo
  {pages} {661} (\bibinfo {year} {1991})}\BibitemShut {NoStop}%
\bibitem [{\citenamefont {Bennett}(1992)}]{Bennett:nonorthogonal:PRL92}%
  \BibitemOpen
  \bibfield  {author} {\bibinfo {author} {\bibfnamefont {C.~H.}\ \bibnamefont
  {Bennett}},\ }\href@noop {} {\bibfield  {journal} {\bibinfo  {journal}
  {Physical Review Letters}\ }\textbf {\bibinfo {volume} {68}},\ \bibinfo
  {pages} {3121} (\bibinfo {year} {1992})}\BibitemShut {NoStop}%
\bibitem [{\citenamefont {Brassard}\ and\ \citenamefont
  {Crepeau}(1991)}]{Brassard:CoinToss:LNCS91}%
  \BibitemOpen
  \bibfield  {author} {\bibinfo {author} {\bibfnamefont {G.}~\bibnamefont
  {Brassard}}\ and\ \bibinfo {author} {\bibfnamefont {C.}~\bibnamefont
  {Crepeau}},\ }in\ \href@noop {} {\emph {\bibinfo {booktitle} {Advances In
  Cryptology-Crypto'90}}},\ \bibinfo {series} {Lecture Notes in Computer
  Science}, Vol.\ \bibinfo {volume} {537}\ (\bibinfo  {publisher} {Springer},\
  \bibinfo {year} {1991})\ pp.\ \bibinfo {pages} {49--61}\BibitemShut {NoStop}%
\bibitem [{\citenamefont {Brassard}\ \emph {et~al.}(1993)\citenamefont
  {Brassard}, \citenamefont {Crepeau}, \citenamefont {Jozsa},\ and\
  \citenamefont {Langlois}}]{Brassard:commitment:ASFCS93}%
  \BibitemOpen
  \bibfield  {author} {\bibinfo {author} {\bibfnamefont {G.}~\bibnamefont
  {Brassard}}, \bibinfo {author} {\bibfnamefont {C.}~\bibnamefont {Crepeau}},
  \bibinfo {author} {\bibfnamefont {R.}~\bibnamefont {Jozsa}}, \ and\ \bibinfo
  {author} {\bibfnamefont {D.}~\bibnamefont {Langlois}},\ }in\ \href@noop {}
  {\emph {\bibinfo {booktitle} {Proceedings of the 34th Annual Symposium on
  Foundations of Computer Science}}}\ (\bibinfo {year} {1993})\ pp.\ \bibinfo
  {pages} {362--371}\BibitemShut {NoStop}%
\bibitem [{\citenamefont {Ardehali}(1996)}]{Ardehali:EPR:arXiv96}%
  \BibitemOpen
  \bibfield  {author} {\bibinfo {author} {\bibfnamefont {M.}~\bibnamefont
  {Ardehali}},\ }\href@noop {} {\enquote {\bibinfo {title} {A quantum bit
  commitment protocol based on {EPR} states},}\ } (\bibinfo {year} {1996}),\
  \bibinfo {note} {arXiv:quant-ph/9505019v5}\BibitemShut {NoStop}%
\bibitem [{\citenamefont {Mayers}(1996)}]{Mayers:Trouble:arXiv96}%
  \BibitemOpen
  \bibfield  {author} {\bibinfo {author} {\bibfnamefont {D.}~\bibnamefont
  {Mayers}},\ }\href@noop {} {\enquote {\bibinfo {title} {The trouble with
  quantum bit commitment},}\ } (\bibinfo {year} {1996}),\ \bibinfo {note}
  {arXiv:quant-ph/9603015v3}\BibitemShut {NoStop}%
\bibitem [{\citenamefont {Mayers}(1997)}]{Mayers:Commitment:PRL97}%
  \BibitemOpen
  \bibfield  {author} {\bibinfo {author} {\bibfnamefont {D.}~\bibnamefont
  {Mayers}},\ }\href@noop {} {\bibfield  {journal} {\bibinfo  {journal}
  {Physical Review Letters}\ }\textbf {\bibinfo {volume} {78}},\ \bibinfo
  {pages} {3414} (\bibinfo {year} {1997})}\BibitemShut {NoStop}%
\bibitem [{\citenamefont {Lo}\ and\ \citenamefont
  {Chau}(1997)}]{Lo:CommitPossi:PRL97}%
  \BibitemOpen
  \bibfield  {author} {\bibinfo {author} {\bibfnamefont {H.-K.}\ \bibnamefont
  {Lo}}\ and\ \bibinfo {author} {\bibfnamefont {H.~F.}\ \bibnamefont {Chau}},\
  }\href@noop {} {\bibfield  {journal} {\bibinfo  {journal} {Physical Review
  Letters}\ }\textbf {\bibinfo {volume} {78}},\ \bibinfo {pages} {3410}
  (\bibinfo {year} {1997})}\BibitemShut {NoStop}%
\bibitem [{\citenamefont {Kent}(1999)}]{Kent:Commit:PRL99}%
  \BibitemOpen
  \bibfield  {author} {\bibinfo {author} {\bibfnamefont {A.}~\bibnamefont
  {Kent}},\ }\href@noop {} {\bibfield  {journal} {\bibinfo  {journal} {Physical
  Review Letters}\ }\textbf {\bibinfo {volume} {83}},\ \bibinfo {pages} {1447}
  (\bibinfo {year} {1999})}\BibitemShut {NoStop}%
\bibitem [{\citenamefont {Kent}(2005)}]{Kent:ClassicalCommitment:JC05}%
  \BibitemOpen
  \bibfield  {author} {\bibinfo {author} {\bibfnamefont {A.}~\bibnamefont
  {Kent}},\ }\href@noop {} {\bibfield  {journal} {\bibinfo  {journal} {Journal
  of Cryptology}\ }\textbf {\bibinfo {volume} {18}},\ \bibinfo {pages} {313}
  (\bibinfo {year} {2005})}\BibitemShut {NoStop}%
\bibitem [{\citenamefont {Damgard}\ \emph {et~al.}(2005)\citenamefont
  {Damgard}, \citenamefont {Fehr}, \citenamefont {Salvail},\ and\ \citenamefont
  {Schaffner}}]{Damgard:QuantumModel:ASFCS05}%
  \BibitemOpen
  \bibfield  {author} {\bibinfo {author} {\bibfnamefont {I.~B.}\ \bibnamefont
  {Damgard}}, \bibinfo {author} {\bibfnamefont {S.}~\bibnamefont {Fehr}},
  \bibinfo {author} {\bibfnamefont {L.}~\bibnamefont {Salvail}}, \ and\
  \bibinfo {author} {\bibfnamefont {C.}~\bibnamefont {Schaffner}},\ }in\
  \href@noop {} {\emph {\bibinfo {booktitle} {Proceedings of the 46th Annual
  Symposium on Foundations of Computer Science}}}\ (\bibinfo {year} {2005})\
  pp.\ \bibinfo {pages} {449--458}\BibitemShut {NoStop}%
\bibitem [{\citenamefont {Hardy}\ and\ \citenamefont
  {Kent}(2004)}]{Kent:SensiCommit:PRL04}%
  \BibitemOpen
  \bibfield  {author} {\bibinfo {author} {\bibfnamefont {L.}~\bibnamefont
  {Hardy}}\ and\ \bibinfo {author} {\bibfnamefont {A.}~\bibnamefont {Kent}},\
  }\href@noop {} {\bibfield  {journal} {\bibinfo  {journal} {Physical Review
  Letters}\ }\textbf {\bibinfo {volume} {92}},\ \bibinfo {pages} {article no.
  157901} (\bibinfo {year} {2004})}\BibitemShut {NoStop}%
\bibitem [{\citenamefont {Wehner}\ \emph {et~al.}(2010)\citenamefont {Wehner},
  \citenamefont {Curty}, \citenamefont {Schaffner},\ and\ \citenamefont
  {Lo}}]{Wehner:noisystorage:PRA10}%
  \BibitemOpen
  \bibfield  {author} {\bibinfo {author} {\bibfnamefont {S.}~\bibnamefont
  {Wehner}}, \bibinfo {author} {\bibfnamefont {M.}~\bibnamefont {Curty}},
  \bibinfo {author} {\bibfnamefont {C.}~\bibnamefont {Schaffner}}, \ and\
  \bibinfo {author} {\bibfnamefont {H.-K.}\ \bibnamefont {Lo}},\ }\href@noop {}
  {\bibfield  {journal} {\bibinfo  {journal} {Physical Review A}\ }\textbf
  {\bibinfo {volume} {81}},\ \bibinfo {pages} {article no. 052336} (\bibinfo
  {year} {2010})}\BibitemShut {NoStop}%
\bibitem [{\citenamefont {Choi}\ \emph {et~al.}(2009)\citenamefont {Choi},
  \citenamefont {Hong}, \citenamefont {Chang}, \citenamefont {Chi},\ and\
  \citenamefont {Lee}}]{Choi:NonStatic:arXiv09}%
  \BibitemOpen
  \bibfield  {author} {\bibinfo {author} {\bibfnamefont {J.~W.}\ \bibnamefont
  {Choi}}, \bibinfo {author} {\bibfnamefont {D.}~\bibnamefont {Hong}}, \bibinfo
  {author} {\bibfnamefont {K.-Y.}\ \bibnamefont {Chang}}, \bibinfo {author}
  {\bibfnamefont {D.~P.}\ \bibnamefont {Chi}}, \ and\ \bibinfo {author}
  {\bibfnamefont {S.}~\bibnamefont {Lee}},\ }in\ \href@noop {} {\emph {\bibinfo
  {booktitle} {Proceedings of the 9th Asian Conference on Quantum Information
  Science}}}\ (\bibinfo {year} {2009})\ pp.\ \bibinfo {pages} {205--206},\
  \bibinfo {note} {also available at: arXiv:0901.1178v4}\BibitemShut {NoStop}%
\bibitem [{\citenamefont {D'Ariano}\ \emph {et~al.}(2007)\citenamefont
  {D'Ariano}, \citenamefont {Kretschmann}, \citenamefont {Schlingemann},\ and\
  \citenamefont {Werner}}]{DAriano:bitcommitment:PRA07}%
  \BibitemOpen
  \bibfield  {author} {\bibinfo {author} {\bibfnamefont {G.~M.}\ \bibnamefont
  {D'Ariano}}, \bibinfo {author} {\bibfnamefont {D.}~\bibnamefont
  {Kretschmann}}, \bibinfo {author} {\bibfnamefont {D.}~\bibnamefont
  {Schlingemann}}, \ and\ \bibinfo {author} {\bibfnamefont {R.~F.}\
  \bibnamefont {Werner}},\ }\href@noop {} {\bibfield  {journal} {\bibinfo
  {journal} {Physical Review A}\ }\textbf {\bibinfo {volume} {76}},\ \bibinfo
  {pages} {article no. 032328} (\bibinfo {year} {2007})}\BibitemShut {NoStop}%
\bibitem [{\citenamefont {Gisin}(1989)}]{Gisin:Stochastic:HPA89}%
  \BibitemOpen
  \bibfield  {author} {\bibinfo {author} {\bibfnamefont {N.}~\bibnamefont
  {Gisin}},\ }\href@noop {} {\bibfield  {journal} {\bibinfo  {journal}
  {Helvetica Physica Acta}\ }\textbf {\bibinfo {volume} {62}},\ \bibinfo
  {pages} {363} (\bibinfo {year} {1989})}\BibitemShut {NoStop}%
\bibitem [{\citenamefont {Hughstona}\ \emph {et~al.}(1993)\citenamefont
  {Hughstona}, \citenamefont {Jozsa},\ and\ \citenamefont
  {Wootters}}]{Hughstona:classification:PLA93}%
  \BibitemOpen
  \bibfield  {author} {\bibinfo {author} {\bibfnamefont {L.~P.}\ \bibnamefont
  {Hughstona}}, \bibinfo {author} {\bibfnamefont {R.}~\bibnamefont {Jozsa}}, \
  and\ \bibinfo {author} {\bibfnamefont {W.~K.}\ \bibnamefont {Wootters}},\
  }\href@noop {} {\bibfield  {journal} {\bibinfo  {journal} {Physics Letters
  A}\ }\textbf {\bibinfo {volume} {183}},\ \bibinfo {pages} {14} (\bibinfo
  {year} {1993})}\BibitemShut {NoStop}%
\bibitem [{\citenamefont {Jozsa}(1994)}]{Jozsa:MixedQuantum:JMO94}%
  \BibitemOpen
  \bibfield  {author} {\bibinfo {author} {\bibfnamefont {R.}~\bibnamefont
  {Jozsa}},\ }\href@noop {} {\bibfield  {journal} {\bibinfo  {journal} {Journal
  of Modern Optics}\ }\textbf {\bibinfo {volume} {41}},\ \bibinfo {pages}
  {2315} (\bibinfo {year} {1994})}\BibitemShut {NoStop}%
\end{thebibliography}%
\end{document}